\numberwithin{equation}{section}
\theoremstyle{plain}
\newtheorem{theorem}{Theorem}
\newtheorem{lemma}{Lemma}
\newtheorem{alg}{Algorithm}
\theoremstyle{definition}
\newtheorem{proof}{Proof}
\newtheorem{remark}{Remark}
\begin{document}

\title{A new algorithm for solving the $rSUM$ problem}
\author{Valerii Sopin}
\address{Lomonosov Moscow State University, Moscow}
\email{VvS@myself.com}
\udk{510.52} \maketitle
\begin{fulltext}

\begin{abstract}
A determined algorithm is presented for solving the $rSUM$ problem
for any natural $r$ with a sub-quadratic assessment of time
complexity in some cases. In terms of an amount of memory used the
obtained algorithm is the $n\log^3 n$ order.
\end{abstract}

\markright{A new algorithm for solving the $rSUM$ problem}

\section{Introduction}\label{s1}
In computational complexity theory, the $3SUM$ problem asks if a
given set of $n$ integers, each with absolute value bounded by some
polynomial in $n$, contains three elements that sum to
zero.~\cite{1, 2}. The generalized version, $rSUM$, asks the same
question for $r$ elements.~\cite{1, 2}.

The $3SUM$ problem was initially set in~\cite{1}. Gajentaan and
Overmars collected a large list of geometric problems, which may be
solved in an order of quadratic complexity, and nobody knows, how to
do it faster~\cite{1}.

Hereinafter, we understand the order of complexity as asymptotic
complexity of the algorithm, namely: the computational complexity
(number of operations) of a given algorithm is bounded from above
with function $f(n)$ (which is the order of complexity) with
accuracy to the constant multiplier and for the sufficiently large
input length $n$.

The $3SUM$ problem has a simple and obvious algorithm for solving in
the order of $n^2$ operations~\cite{1, 2}.

There are a probabilistic, sub-quadratic algorithms~\cite{3} in the
computational model, which implies parallel memory operation.

A determined algorithm of solving the $3SUM$ problem based on the
Fast Fourier Transformation was suggested in~\cite{4}. However it
assumes that absolute values of these $n$ numbers are limited by the
number $\frac{n^2}{\log n}$.

There are a algorithms based on sorting with partial
information~\cite{5}.

A solution to the generalized version of the problem, $rSUM$, may be
found in~\cite{2}. Its known order of complexity is
$n^{\frac{r}{2}}$ (the "meet-in-the-middle"\; algorithm).

The paper suggests a determined algorithm of solving the $rSUM$
problem for any $r\in\mathbb{N}$, which is of the order of $n\log^3
n$ in terms of the amount of memory used, with computational
complexity of the sub-quadratic order in some cases.

The idea of the obtained algorithm is based not considering integer
numbers, but rather $k\in\mathbb{N}$ successive bits of these
numbers in the binary numeration system. It is shown that if a sum
of integer numbers is equal to zero, then the sum of numbers
presented by any k successive bits of these numbers must be
sufficiently "close"\; (see Lemma~\ref{l2},~\ref{l3}) to zero. This
makes it possible to discard the numbers, which a fortiori, do not
establish the solution.

\section{Algorithm for solving the $rSUM$ problem}\label{s2}
Hereinafter, $|y|$ designates an absolute value of integer number
$y$, $\lceil y\rceil$  is the smallest integer greater than or equal
to $y$, $\lfloor y \rfloor$ is the smallest integer smaller than or
equal to $y$. A mapping $\text{sign}(y)$ returns the sign of integer
$y$ (it returns zero for zero).

Introduce mapping $P^k_j:\mathbb{Z}\mapsto\mathbb{F}^k_2$ for any
$k\in \mathbb{N}$ and $j\in \mathbb{N}\cup\{0\}$ as follows:
$$P^k_j(z) = \text{sign}(z)z_j,\;\forall z=\text{sign}(z)\sum\limits_{i=0}^{\infty}z_i2^{ik} \in
\mathbb{Z},$$ i.e. $j$ digit of integer $z$ in a numeral system with
base $2^k$.

Given:   set $\Omega$ of $n$ integer numbers, $m$ is the degree of a
polynomial, which bounds the maximum absolute value of input numbers
($n^m=2^{m\log_2 n})$.

\begin{alg}\label{a1}~ \par
1) From among the numbers in question, find $\zeta$, which is the
maximum in terms of its absolute value. Calculate $l =
\lceil\log_2(\zeta)\rceil.$

2) In a cycle on $j$ from $0$ to $\lfloor\frac{l+\lceil\log_2
r\rceil}{3\lceil\log_2 r\rceil}\rfloor$ perform the following:

~~~~2.1) Consider the numbers in $\Omega$ upon application of
$P^{3\lceil\log_2 r\rceil}_j$ and set them down in array $\Phi_j$ so
that the number of identical elements would not exceed $r$.

With each $\gamma \in \Phi_j$ group such ordinals of elements in
$\Omega$, where numbers with such ordinals in $\Omega$ and only
these numbers would be equal to $\gamma$ after using of
$P^{3\lceil\log_2 r\rceil}_j$. We associate it with table $\Pi_j$.

Brute force to find all $y_1\in \Phi_j$, where $\exists y_2, y_3,
\dots, y_{r}\in \Phi_j:$ $$|\sum\limits_{i=1}^{r}P^{3\lceil\log_2
r\rceil}_j (y_i)| < r \text{ mod } 2^{3\lceil\log_2 r\rceil},$$ for
$j = 0$, strict comparison to zero must be performed.

The gotten r-tuples, namely, their ordinals in $\Phi_j$, are to be
set down in $\Upsilon_j$.

3) Return $\Upsilon=\{$ $\Upsilon_j$ $\}$ and $\Pi=\{$ $\Pi_{j}$
$\}$.\end{alg}

\begin{alg}\label{a2} \verb"Algorithm for solving the rSUM problem"\par
1) Perform Algorithm~\ref{a1}: $\Upsilon^1,\;\Pi^1$.

2) Shift the elements of $\Omega$ cyclically by $\lceil\log_2
r\rceil$ bits to the right, that the sign bit is retained for all
numbers.

3) Perform Algorithm~\ref{a1} on conditions that for $j = 0$
inequality must be performed rather than comparison, and assume the
last $\lceil\log_2 r\rceil$ bits of numbers from $\Omega$ to be zero
bits: $\Upsilon^2,\;\Pi^2$.

4) Shift the elements of $\Omega$ cyclically by $\lceil\log_2
r\rceil$ bits to the right, that the sign bit is retained for all
numbers.

5) Perform Algorithm~\ref{a1} on conditions that for $j = 0$
inequality must be performed rather than comparison, and assume the
last $2\lceil\log_2 r\rceil$ bits of numbers from $\Omega$ to be
zero bits: $\Upsilon^3,\;\Pi^3$.

6) Shift the elements of $\Omega$ cyclically by $2\lceil\log_2
r\rceil$ bits to the left, that the sign bit is retained for all
numbers.

7) Return $\bigcap\limits_{i, j} \Upsilon^i_j$ relative to elements
of $\Omega$.
\end{alg}

We are now to prove that the presented algorithms are correct.

\begin{lemma}\label{l1}
For any $y_i \in \mathbb{Z}, i=1,\dots, r$, it is true that:

~~1) if $\sum\limits_{i=1}^{r}y_i=0,$ then
$\sum\limits_{i=1}^{r}y_i\equiv0 \text{ mod } 2^{k},$ where $k \in
\mathbb{N}.$

~~2) if $\sum\limits_{i=1}^{r}y_i\equiv0 \text{ mod } 2^{l}, l =
\max\limits_i(\lceil\log_2(|y_i|)\rceil+\lceil\log_2 r\rceil),$ then
$\sum\limits_{i=1}^{r}y_i=0.$\end{lemma}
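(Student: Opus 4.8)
The first assertion takes only a line. If $\sum_{i=1}^{r} y_i = 0$, then the sum is the integer $0$, and $0$ is divisible by $2^k$ for every $k \in \mathbb{N}$; hence $\sum_{i=1}^{r} y_i \equiv 0 \pmod{2^k}$ holds trivially. So my plan is to dispose of part 1) immediately and devote the real work to part 2).

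For part 2) the plan is to combine a magnitude bound on the sum with the rigidity of divisibility by a large power of two. First I would bound each summand by the corresponding power of two: from $\log_2 |y_i| \le \lceil\log_2 |y_i|\rceil$ one gets $|y_i| \le 2^{\lceil\log_2 |y_i|\rceil} \le 2^{c}$, where $c := \max_i \lceil\log_2 |y_i|\rceil$. The triangle inequality then yields
\[
\Bigl|\sum_{i=1}^{r} y_i\Bigr| \le \sum_{i=1}^{r} |y_i| \le r\,2^{c} \le 2^{\lceil\log_2 r\rceil}\,2^{c} = 2^{l},
\]
using $r \le 2^{\lceil\log_2 r\rceil}$ and the definition $l = c + \lceil\log_2 r\rceil$. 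The finishing observation is that a nonzero integer divisible by $2^l$ has absolute value at least $2^l$; since the hypothesis forces $\sum_{i=1}^{r} y_i$ to be a multiple of $2^l$ while the display bounds its absolute value by $2^l$, the sum can only be $0$ or $\pm 2^l$, and ruling out the extreme values gives $\sum_{i=1}^{r} y_i = 0$.

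The main obstacle is precisely this last exclusion, because the chain of inequalities above is non-strict. Equality $\bigl|\sum y_i\bigr| = 2^l$ can fail to be ruled out by magnitude alone: it requires every $|y_i|$ to attain the extreme value $2^c$ with a common sign and, simultaneously, $r$ to be an exact power of two so that $r = 2^{\lceil\log_2 r\rceil}$. I therefore expect the argument to hinge on upgrading the bound to a strict inequality $\bigl|\sum y_i\bigr| < 2^l$ — so that $0$ is the only multiple of $2^l$ in range — by handling this boundary configuration separately. This is also exactly why the algorithm tests the \emph{strict} condition $\bigl|\sum P^{k}_j(y_i)\bigr| < r \pmod{2^{k}}$ rather than a congruence to zero, and I would make the link between the lemma's tight estimate and that strict test explicit when closing the argument.
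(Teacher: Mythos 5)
Your part 1) is fine and is exactly what the paper intends. For part 2) you follow the same route as the paper's one-line justification (the bound $\sum_{i=1}^{r}2^{t}=r2^{t}$, i.e.\ $\bigl|\sum_i y_i\bigr|\le r\,2^{c}\le 2^{l}$ with $c=\max_i\lceil\log_2|y_i|\rceil$), but you stop at an acknowledged obstacle: the chain of inequalities is non-strict, so magnitude alone only confines the sum to $\{0,\pm 2^{l}\}$, and you never actually rule out $\pm 2^{l}$ --- you only say you ``expect'' to handle that boundary configuration separately. That is a genuine gap, and it cannot be closed as the statement stands, because the boundary configuration you describe is a real counterexample: take $r=2$ and $y_1=y_2=1$. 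Then $\lceil\log_2|y_i|\rceil=0$, $\lceil\log_2 r\rceil=1$, so $l=1$, and $y_1+y_2=2\equiv 0 \pmod{2^{1}}$ while $y_1+y_2\neq 0$. More generally the implication fails exactly when $r$ is a power of two and every $y_i$ equals $\pm 2^{c}$ with a common sign (e.g.\ $r=4$, all $y_i=1$, $l=2$, sum $=4\equiv 0\pmod 4$). For $r=3$, the headline case, $r<2^{\lceil\log_2 r\rceil}$ makes your inequality strict and the lemma holds.

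So the honest conclusion is not that your proof needs one more step, but that the lemma needs one more bit: replacing $l$ by $l+1$, or defining $l=\max_i\bigl(\lfloor\log_2|y_i|\rfloor+1+\lceil\log_2 r\rceil\bigr)$, restores the strict bound $\bigl|\sum_i y_i\bigr|<2^{l}$ and then your argument closes immediately (the only multiple of $2^{l}$ of absolute value less than $2^{l}$ is $0$). You deserve credit for locating precisely the point the paper's proof (``Obvious\dots because of $\sum 2^{t}=r2^{t}$'') glosses over; but as submitted, your proof does not establish the claim, and no proof can without amending the hypothesis.
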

\begin{proof}Obvious. This forms the basis of computer algebra.

The second statement is right because of $\sum\limits_{i=1}^{r} 2^t
= r2^t$.\end{proof}

\begin{lemma}\label{l2}
For any $y_i \in \mathbb{Z}, i=1,\dots, r$, it is true that:

if $\sum\limits_{i=1}^{r}y_i=0,$ then
$|\sum\limits_{i=1}^{r}P^k_j(y_i)|< r\text{ mod }
2^k,\\j=0,\dots,\lfloor\frac{l}{k}\rfloor,$ $l =
\max\limits_i(\lceil\log_2(y_i)\rceil+\lceil\log_2 r\rceil),$
$k>\lceil\log_2 r\rceil\in \mathbb{N}$.\end{lemma}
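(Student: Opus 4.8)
The plan is to reduce everything to a bookkeeping of carries in base $2^k$ addition. First I would fix the reading of the conclusion: since $k>\lceil\log_2 r\rceil$ forces $2^k\ge 2r>2(r-1)$, the inequality $|\sum_i P^k_j(y_i)|<r\bmod 2^k$ should be understood as the assertion that the residue of $\sum_{i=1}^{r}P^k_j(y_i)$ modulo $2^k$ possesses a (necessarily unique) representative of absolute value strictly less than $r$. Writing each magnitude in base $2^k$ as $|y_i|=\sum_{t\ge 0}d_{i,t}2^{tk}$ with $d_{i,t}\in\{0,\dots,2^k-1\}$, we have $P^k_j(y_i)=\operatorname{sign}(y_i)\,d_{i,j}$, so the quantity to control is $S_j:=\sum_i\operatorname{sign}(y_i)d_{i,j}=P_j-M_j$, where $P_j$ (resp.\ $M_j$) is the sum of the $j$-th digits of the positive (resp.\ negative) inputs.

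Next I would exploit the hypothesis $\sum_i y_i=0$ by separating the inputs by sign. Let $A=\sum_{y_i>0}|y_i|$ and $B=\sum_{y_i<0}|y_i|$; then $A=B$, so $A$ and $B$ share identical base-$2^k$ digits, $a_t=b_t$ for every $t$. Now I would run the schoolbook base-$2^k$ addition that produces the digits of $A$ from the positive inputs: with carry-in $c^+_{-1}=0$ one has $a_t=(P_t+c^+_{t-1})\bmod 2^k$ and $c^+_t=\lfloor(P_t+c^+_{t-1})/2^k\rfloor$, and symmetrically $c^-$ for $B$. A short induction yields the crucial carry bound: if $p$ is the number of positive inputs then $c^+_t\le p-1$, since $P_t+c^+_{t-1}\le p(2^k-1)+(p-1)=p\,2^k-1$; likewise $c^-_t\le q-1$ for $q$ the number of negative inputs, with $p+q\le r$.

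Then I would combine these. From $a_j=b_j$ we obtain $P_j+c^+_{j-1}\equiv M_j+c^-_{j-1}\pmod{2^k}$, hence $S_j=P_j-M_j\equiv c^-_{j-1}-c^+_{j-1}\pmod{2^k}$. Since $c^-_{j-1}\in[0,q-1]$ and $c^+_{j-1}\in[0,p-1]$ with $p,q\ge 1$ and $p+q\le r$, the integer $c^-_{j-1}-c^+_{j-1}$ lies in $[-(p-1),\,q-1]$ and so has absolute value at most $\max(p-1,q-1)\le r-1<r$. This exhibits a representative of $S_j\bmod 2^k$ of absolute value $<r$, which is exactly the claim; the hypothesis $k>\lceil\log_2 r\rceil$ is precisely what guarantees $2^k>2(r-1)$, so that this small representative is unique and the admissible window of residues does not wrap around. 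The case $j=0$ falls out with $c^+_{-1}=c^-_{-1}=0$, giving $S_0\equiv 0\pmod{2^k}$, which explains why the algorithm tests exact divisibility at that position.

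The main obstacle I anticipate is not any single estimate but getting the carry bound and its propagation exactly right: one must verify the inductive step $c^+_t\le p-1$ (taking care that the bound is $p-1$ and not $p$), track that the carries from the positive and negative sides enter with opposite signs, and confirm that the resulting window $(-r,r)$ of residues is genuinely unambiguous modulo $2^k$, which is where $k>\lceil\log_2 r\rceil$ is indispensable. The delimiter $j\le\lfloor l/k\rfloor$ plays no role in the estimate itself — beyond that range all digits vanish and $S_j=0$ — so I would simply remark that $l$ merely bounds the positions at which the digits can be nonzero.
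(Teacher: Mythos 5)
Your proof is correct, but it is organized quite differently from the paper's. The paper works with the signed sum directly: using Lemma~\ref{l1} it knows $\sum_i y_i\equiv 0\pmod{2^{sk}}$, splits each $y_i$ into its $s$-th base-$2^k$ digit times the appropriate power of two plus the residue $\operatorname{sign}(y_i)(|y_i|\bmod 2^{sk})$, and observes that the sum of these residues is simultaneously divisible by $2^{sk}$ and of absolute value below $r2^{sk}$, hence equals $\delta 2^{sk}$ with $|\delta|<r$; reducing modulo the next power of $2^k$ then gives $\sum_i P^k_s(y_i)\equiv-\delta\pmod{2^k}$. You instead separate the inputs by sign into two unsigned totals $A=B$, run the schoolbook base-$2^k$ addition on each side, prove the carry bounds $c^+_t\le p-1$, $c^-_t\le q-1$ by induction, and compare equal digits to get $S_j\equiv c^-_{j-1}-c^+_{j-1}\pmod{2^k}$. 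The two arguments are both carry-counting at heart, and your $c^-_{j-1}-c^+_{j-1}$ plays exactly the role of the paper's $-\delta$; the paper's version is shorter and avoids the sign split by exploiting the divisibility from Lemma~\ref{l1}, while yours is more self-contained, makes the meaning of the awkward notation ``$|\cdot|<r\bmod 2^k$'' explicit (a representative of the residue class of absolute value below $r$, unique because $2^k>2(r-1)$), and in fact yields the marginally sharper bound $\max(p,q)-1\le r-2$ when both signs occur. One cosmetic point: you should note explicitly that the degenerate cases (all $y_i=0$, or $p=0$ or $q=0$, which under $\sum_i y_i=0$ again force all terms to vanish) are trivial, so that the assumption $p,q\ge 1$ in your final window estimate is justified.
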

\begin{proof}For $j=0$ the condition of Lemma~\ref{l2} is met by virtue of Lemma~\ref{l1}.

Assume the opposite meaning that for a value $j = s$, for some $r$
numbers meeting the condition of Lemma~\ref{l2}, the required
inequality is wrong. At the same time, by virtue of Lemma~\ref{l1}:
$$\sum\limits_{i=1}^{r}y_i \equiv0 \text{ mod } 2^{sk}.$$

Present each $y_i\text{ mod } 2^{sk}$ as a sum of the value $P^k_s$
(the last $k$ bits of numbers $\text{sign}(y_i)(|y_i| \text{ mod }
2^{sk}$)) and the residue by module $2^{(s-1)k}$, then
$$2^{(s-1)k}\sum\limits_{i=1}^{r}P^k_s(y_i)\equiv
-(\sum\limits_{i=1}^{r}\text{ sign}(y_i)(|y_i| \text{ mod }
2^{(s-1)k})) \equiv \delta2^{(s-1)k}\text{ mod } 2^{sk},$$ where
$|\delta|< r$, as the sum of $r$ numbers, the absolute value of
which is smaller than $2^j$ for a natural $j$, cannot exceed
$r2^j-r$. Besides, we know from Lemma~\ref{l1} that
$\sum\limits_{i=1}^{r}y_i \equiv0 \text{ mod } 2^{(s-1)k}$. From
here, we obtain the required.\end{proof}

\begin{lemma}\label{l3}
For any $y_i \in \mathbb{Z}, i=1,\dots, r$, it is true that:

if $\sum\limits_{i=1}^{r}y_i=0,$ then for $\tilde{y}_i$ the
inequality $|\sum\limits_{i=1}^{r}P^k_j(\tilde{y}_i)|< r\text{ mod }
2^k$ is true, where $\tilde{y}_i$ is obtained from $y_i$ by
arithmetic shift to the right by $t$ bits.

$t, k>\lceil\log_2 r\rceil$ are any natural numbers, and $j$ is any
non-negative integer.\end{lemma}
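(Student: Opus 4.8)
The plan is to reduce this to the exact-sum situation of Lemma~\ref{l2}: first show that, although the shifted numbers $\tilde y_i$ need not sum to zero, their sum is strictly bounded by $r$ in absolute value, and then rerun the digit-by-digit congruence argument of Lemma~\ref{l2} with the nonzero residual carried along.

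First I would record how the arithmetic shift relates $\tilde y_i$ to $y_i$. Since the shift preserves the sign, $\tilde y_i = \text{sign}(y_i)\lfloor |y_i|/2^t\rfloor$, so that
$$y_i = 2^t\tilde y_i + \text{sign}(y_i)(|y_i|\bmod 2^t).$$
Summing over $i$ and using $\sum_{i=1}^r y_i = 0$ gives $2^t\sum_{i=1}^r\tilde y_i = -\sum_{i=1}^r\text{sign}(y_i)(|y_i|\bmod 2^t)$. Each residue lies in $[0,2^t)$, so the right-hand side has absolute value at most $r(2^t-1)=r2^t-r$; being a multiple of $2^t$, it forces $S:=\sum_{i=1}^r\tilde y_i$ to satisfy $|S|\le r-1$. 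It is precisely here that the hypothesis $t>\lceil\log_2 r\rceil$, i.e.\ $2^t>r$, is used, since it is what lets me absorb the $-r$ term and conclude $|S|<r$ rather than the weaker $|S|\le r$.

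Next I would run the argument of Lemma~\ref{l2}, but with the global sum equal to $S$ instead of $0$. Writing $B_j:=\sum_{i=1}^r\text{sign}(\tilde y_i)(|\tilde y_i|\bmod 2^{jk})$ for the signed sum of the low $j$ base-$2^k$ digits, one has $B_j\equiv\sum_i\tilde y_i=S\pmod{2^{jk}}$ and $|B_j|\le r(2^{jk}-1)$, so that $B_j=S+2^{jk}\beta_j$ with $|\beta_j|<r$ (again using $|S|\le r-1$ to keep the quotient strictly below $r$; for $j=0$ one has $B_0=0$ and $\beta_0=-S$). Since $2^{jk}\sum_{i=1}^rP^k_j(\tilde y_i)=B_{j+1}-B_j$, substituting the two representations cancels the copies of $S$ and yields $\sum_{i=1}^rP^k_j(\tilde y_i)\equiv-\beta_j\pmod{2^k}$. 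Because $k>\lceil\log_2 r\rceil$ gives $r\le 2^{k-1}$, the value $-\beta_j$ is already the balanced residue modulo $2^k$, so $|\sum_{i=1}^rP^k_j(\tilde y_i)|<r\bmod 2^k$ for every $j$, as required.

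The main obstacle is the bookkeeping of the nonzero defect $S$. In Lemma~\ref{l2} the sum is exactly zero, so the leading-digit contribution is an exact multiple of the appropriate power of $2^k$ and the constant comes out cleanly below $r$. Here the stray term $S$ threatens to inflate every bound by one unit and degrade the conclusion from $<r$ to $\le r$; the delicate point is to verify that the sharpened estimate $|S|\le r-1$ — which is exactly what $2^t>r$ buys — suffices to keep each $|\beta_j|$ strictly below $r$ at all positions $j$.
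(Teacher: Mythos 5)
Your proof is correct, and at bottom it is the same telescoping carry argument the paper uses, just organized differently. The paper's (very terse) proof stays with the original numbers $y_i$ and writes one congruence at the shifted bit position, namely $2^{t+kj}\sum_i P^k_j(\tilde y_i)\equiv -\sum_i\operatorname{sign}(y_i)\,(|y_i|\bmod 2^{t+kj})\pmod{2^{t+k(j+1)}}$, and then invokes Lemma~\ref{l2} verbatim: the right-hand side is divisible by $2^{t+kj}$ (since $\sum_i y_i=0$) and bounded by $r2^{t+kj}-r$, so the quotient $\delta$ satisfies $|\delta|<r$. You instead perform a two-stage reduction: first extract the defect $S=\sum_i\tilde y_i$ with $|S|\le r-1$, then rerun Lemma~\ref{l2} on the $\tilde y_i$ themselves with $S$ carried along via the quantities $B_j$ and $\beta_j$. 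The two bookkeepings are equivalent (your $S$ is exactly the contribution of the $t$ discarded low-order bits that the paper keeps inside its single congruence), but your version is more explicit and self-contained than the paper's one-line proof, and the identity $2^{jk}\sum_i P^k_j(\tilde y_i)=B_{j+1}-B_j$ makes the cancellation of $S$ transparent. One small correction: the bound $|S|\le r-1$ does not actually need $2^t>r$; from $|2^tS|\le r(2^t-1)<r2^t$ one gets $|S|<r$ for every $t\ge 1$, so the hypothesis $t>\lceil\log_2 r\rceil$ is not used where you say it is (it matters for the algorithm, not for this estimate). This is a misattribution, not a gap; the argument stands.
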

\begin{proof}$$2^{t+k(j-1)}\sum\limits_{i=1}^{r}P^k_j(\tilde{y}_i) \equiv
-(\sum\limits_{i=1}^{r} \text{ sign}(y_i) (|y_i| \text{ mod }
2^{t+k(j-1)})) \text{ mod } 2^{t+kj}.$$

Further on, the proof totally replicates the proof of
Lemma~\ref{l2}.\end{proof}

\begin{theorem}\label{t1} Algorithm~\ref{a2} will issue the solution of the $rSUM$ problem.\end{theorem}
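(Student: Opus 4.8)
The plan is to prove the two-sided statement behind the theorem: the set $\bigcap_{i,j}\Upsilon^i_j$ returned in step~7, read back through the tables $\Pi$ to ordinals in $\Omega$, is exactly the set of $r$-tuples of $\Omega$ whose elements sum to zero. Throughout write $q=\lceil\log_2 r\rceil$ and $k=3q$ for the window width, and observe that the three executions of Algorithm~\ref{a1} inside Algorithm~\ref{a2} process the arithmetic right-shifts of $\Omega$ by $0$, $q$ and $2q$ bits, so that across the three passes the width-$k$ windows have their low endpoints at every multiple of $q$ and any two consecutive windows overlap in $2q$ bits.

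First I would establish that no genuine solution is ever discarded. Let $(y_1,\dots,y_r)$ satisfy $\sum_{t=1}^r y_t=0$. For the unshifted pass, Lemma~\ref{l2} gives $|\sum_t P^k_j(y_t)|<r \bmod 2^k$ for every $j=0,\dots,\lfloor L/k\rfloor$, where $L=\lceil\log_2\zeta\rceil$ is the $l$ of Algorithm~\ref{a1}, and at $j=0$ the residue is exactly zero; hence the tuple meets every block filter and its ordinals are recorded in each $\Upsilon^1_j$. For the two shifted passes the same inequalities hold by Lemma~\ref{l3} (arithmetic shift by $q$ and by $2q$). It then remains to check that capping at $r$ the multiplicity of each value in $\Phi_j$ loses nothing: an $r$-tuple contains at most $r$ entries equal to any fixed block value, so a representative of each value the tuple needs, together with its ordinal in $\Pi_j$, is always retained. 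Consequently the tuple survives in $\Upsilon^i_j$ for all $i,j$ and lies in the intersection.

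The substantive direction is that no false positive survives: if an $r$-tuple passes every filter of all three passes then $S:=\sum_t y_t=0$. I would reduce this to a single divisibility. Since each $|y_t|\le\zeta\le 2^{L}$, the threshold $\max_t(\lceil\log_2|y_t|\rceil+\lceil\log_2 r\rceil)$ appearing in Lemma~\ref{l1}(2) is at most $L+q$; hence it suffices to prove $S\equiv 0\bmod 2^{L+q}$, after which Lemma~\ref{l1}(2) forces $S=0$. To obtain this congruence I would track the signed low-order partial sums $T_m:=\sum_t \mathrm{sign}(y_t)(|y_t|\bmod 2^m)$, which satisfy $T_m\equiv S\bmod 2^m$, and feed them into the identity displayed in the proof of Lemma~\ref{l2} together with its shifted analogue from Lemma~\ref{l3}. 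Each passed filter then asserts that, once the already-controlled lower bits are removed, the next $q$-bit slice of $T_m$ can carry only a residue of magnitude $<r\le 2^q$. Because the three phases place window endpoints at every multiple of $q$ and consecutive windows overlap by $2q$, these residue bounds are meant to chain: the exact-zero condition at $j=0$ of the unshifted pass starts an induction on the endpoints $0,q,2q,\dots,L$, and at each step the overlap should force the current $q$-bit slice of $T$ to vanish rather than merely be small, yielding $T_{L+q}\equiv 0$ and hence $S\equiv 0\bmod 2^{L+q}$.

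The hard part will be precisely this chaining argument. A single pass only bounds each window's residue by $r$, and such slacks could a priori accumulate over the $\Theta(L/k)$ windows to mask a nonzero $S$; the entire purpose of the three overlapping phases (width $3q$, shift $q$) is to pin, window by window, the carry passed between adjacent windows so that no slack can persist. Making this rigorous means proving the inductive step — that a residue bound $<r$ in a window whose low $2q$ bits have already been forced to zero in fact forces the next $q$-bit slice to zero — and verifying the base case supplied by the strict $j=0$ comparison in the unshifted pass. Once both inclusions are in hand, the intersection in step~7 equals the set of zero-sum $r$-tuples, which is the required solution of the $rSUM$ problem.
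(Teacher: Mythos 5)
Your overall architecture matches the paper's: completeness (no true solution is discarded) from Lemmas~\ref{l1}--\ref{l3} plus the observation that capping multiplicities at $r$ in $\Phi_j$ is harmless, and soundness (no false positive) from the interplay of the three window phases of width $3q$ shifted by $0$, $q$, $2q$ where $q=\lceil\log_2 r\rceil$. The completeness half of your write-up is essentially the paper's argument and is fine. The problem is that your soundness half is a plan, not a proof: you yourself label the chaining induction as ``the hard part,'' state the inductive step that would be needed (a residue bound $<r$ in a window whose low $2q$ bits are already forced to zero must force the next $q$-bit slice to zero), and then stop without proving it or even verifying that the base cases of the two shifted passes are strong enough --- note that in steps 3) and 5) of Algorithm~\ref{a2} the $j=0$ check is only the inequality, not the exact comparison, so the ``exact-zero base case'' you invoke is available only in the unshifted pass. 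Since the entire content of the theorem beyond the lemmas is exactly this step, the proposal has a genuine gap.

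For comparison, the paper does attempt to discharge this step, by a different bookkeeping: from the passed filters it writes $\sum_{i=1}^{r}y_i=\sum_i z_i 2^{3iq}$ with $|z_i|<2r-1$ (accounting for up to $r-1$ lost carries plus the slack $r-1$ allowed by the inequality of Lemma~\ref{l2}), and then argues that the first nonzero $z_j$ is detected either by the pass shifted by $q$ (when $z_j<r$, since that $z_j$ then sits in more significant bits of the shifted window) or by the pass shifted by $2q$ (when $z_j\ge r$), so all $z_j$ must vanish. Whether one finds that argument fully rigorous is a separate matter, but it is the substantive case analysis your proposal defers; to complete your version you would need to prove your inductive step, and you should expect it to decompose into precisely the paper's two cases on the size of the offending slice relative to $r$.
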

\begin{proof}As follows from Lemmas~\ref{l1},~\ref{l2},~\ref{l3}, if there exists a solution of the $rSUM$ problem then, after execution of Algorithm~\ref{a2}, and even more so after execution of Algorithm~\ref{a1}, these numbers will stay within $\Omega$.

The cycle on $j$ in Algorithm~\ref{a1} finishes at iteration
$\lfloor\frac{l+\lceil\log_2 r\rceil}{3\lceil\log_2 r\rceil}\rfloor$
by virtue of the second if-clause in Lemma~\ref{l1}.

After step 1), for each $y_1, y_2, \dots, y_{r} \in \Omega$ takes
place $|\sum\limits_{i=1}^{r}P^{3\lceil\log_2 r\rceil}_j(y_i)|< r
\text{ mod } 2^{3\lceil\log_2 r\rceil}$ for any $j$ under
consideration, for $j = 0$ comparison to zero is performed.

It is about the numbers as such, not some values of
$P^{3\lceil\log_2 r\rceil}_j$ of various numbers at each step on
$j$; this is why we remembered ordinals in r-tuples for --- to
coincide at each step of cycle $j$.

Hence $$\sum\limits_{i=1}^{r}y_i =
\sum\limits_{i=1}^{\lfloor\frac{l+\lceil\log_2
r\rceil}{3\lceil\log_2 r\rceil}\rfloor}z_i2^{3i\lceil\log_2
r\rceil}, \text{ where } |z_i|< 2r-1,$$ as, considering $y_i$ after
using of $P^{3\lceil\log_2 r\rceil}_j$, we may lose in
$\sum\limits_{i=1}^{r}P^{3\lceil\log_2 r\rceil}_j(y_i)$ $r-1$ carry
bits by absolute value relative to the sum $P^{3\lceil\log_2
r\rceil}_j(\sum\limits_{i=1}^{r}y_i)$ (see the proof in
Lemma~\ref{l2}); besides, the very inequality from Lemma~\ref{l2}
makes it possible to differentiate from zero by absolute value to $r
- 1$.

Yet, at step 3), the sum  $P^{3\lceil\log_2 r\rceil}_j$ of
$\tilde{y}_1, \tilde{y}_2, \dots, \tilde{y}_{r}$, where
$\tilde{y}_i$ is $y_i$ at step 2) cyclically shifted to the right by
$\lceil\log_2 r\rceil$, will not meet the necessary inequality for
module $2^{3\lceil\log_2 r\rceil}$ (see Lemma~\ref{l3}) for the
first $j: z_j \neq 0$, if $z_j<r$, as in the latter case, this $z_j$
will not be constituted by the least significant $\lceil\log_2
r\rceil$ bits of a $3\lceil\log_2 r\rceil$-bit number in the binary
numeral system, but by more significant bits, which is determined by
the fact that
$$\sum\limits_{i=1}^{r}\tilde{y}_i = t +
\sum\limits_{i=1}^{\lfloor\frac{l+\lceil\log_2
r\rceil}{3\lceil\log_2 r\rceil}\rfloor}z_i2^{3i\lceil\log_2
r\rceil-\lceil\log_2 r\rceil}, \text{ where } |t|< r.$$

The correctness of this presentation of the sum $\tilde{y}_i$
follows from ideas presented in Lemmas~\ref{l2},~\ref{l3}, as, with
a cyclic shift of numbers $y_i$, we may lose $r-1$ carry bits by
absolute value.

At step 5) we will exclude these $y_1, \dots, y_{r}$, if the first
$z_j \neq 0$ is larger than $r-1$, for the same
considerations.\end{proof}

\section{Computational complexity of suggested algorithm}\label{s3}

\begin{lemma}\label{l4} Algorithm's~\ref{a1} order of complexity is
$n\log n.$\end{lemma}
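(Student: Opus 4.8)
The plan is to bound the total running time as the product of the number of iterations of the outer cycle on $j$ in step 2) and the cost of a single iteration, then to add the cost of the preliminary step 1). First I would dispose of step 1): finding the element $\zeta$ of maximal absolute value is a single pass over $\Omega$, hence $O(n)$ comparisons, while computing $l=\lceil\log_2(\zeta)\rceil$ requires only locating the most significant nonzero digit and is dominated by this pass.

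Next I would count the iterations of the cycle. Since every input number has absolute value at most $n^m=2^{m\log_2 n}$, we have $l\le m\log_2 n+1$, so the loop runs
$$\Big\lfloor\frac{l+\lceil\log_2 r\rceil}{3\lceil\log_2 r\rceil}\Big\rfloor+1 = O\!\left(\frac{\log n}{\log r}\right)=O(\log n)$$
times, because $r$ and $m$ are fixed parameters independent of $n$.

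The crux is to show that a single iteration costs only $O(n)$. Applying $P^{3\lceil\log_2 r\rceil}_j$ to one number extracts a fixed block of $3\lceil\log_2 r\rceil$ bits, which is $O(1)$ work, so computing all $n$ images is $O(n)$. To build $\Phi_j$ together with the table $\Pi_j$, I would use bucketing: there are at most $2\cdot 2^{3\lceil\log_2 r\rceil}=O(r^3)$ possible signed values of $P^{3\lceil\log_2 r\rceil}_j$, so one maintains an array indexed by these values, appends each ordinal to the appropriate bucket to form $\Pi_j$, and ceases admitting an element into $\Phi_j$ once its bucket already holds $r$ copies. A single pass therefore fills both structures in $O(n)$ time, and the total size of $\Pi_j$ is $O(n)$.

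The main obstacle — and the observation that makes the whole estimate work — is arguing that the brute-force search over $r$-tuples in step 2.1) is independent of $n$. Here the cap that the number of identical elements not exceed $r$ is essential: since there are only $O(r^3)$ distinct signed values and at most $r$ copies of each are retained, we have $|\Phi_j|\le r\cdot 2\cdot 2^{3\lceil\log_2 r\rceil}=O(r^4)$, a constant depending only on $r$. Consequently the exhaustive search over $r$-element combinations costs $O(|\Phi_j|^r)=O(1)$, the constant being determined solely by $r$. Combining the three parts, the per-iteration cost is dominated by the $O(n)$ bucketing, so multiplying by the $O(\log n)$ iterations and adding the $O(n)$ of step 1) yields the claimed $O(n\log n)$ bound.
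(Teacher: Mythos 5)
Your proposal is correct and follows essentially the same decomposition as the paper: step 1) costs $O(n)$, the cycle on $j$ runs $O(\log n)$ times since $r$ and $m$ are fixed, each iteration costs $O(n)$ to build $\Phi_j$ and $\Pi_j$, and the brute-force search is $O(1)$ because $|\Phi_j|$ is bounded by a constant depending only on $r$. The only difference is that you build $\Phi_j$ by bucketing on the $O(r^3)$ possible values where the paper uses insertion sort with binary search; both give the same $O(n)$ per-iteration bound, so this is an implementation detail rather than a different argument.
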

\begin{proof} Calculating the maximum element by absolute value is $n$ operations.

Applying $P^{3\lceil\log_2 r\rceil}_j$ to elements of $\Omega$ is no
more than $2n$ operations (taking in modulus and cyclic shift).
Adding the obtained values to $\Phi_j$ after applying of
$P^{3\lceil\log_2 r\rceil}_j$, containing no more $r$ identical
elements, using insertion sort with binary search, is not more than
$n(r2^{3\lceil\log_2 r\rceil}+4\lceil\log_2 r\rceil)$ operations,
where we use $4\lceil\log_2 r\rceil$ to assess the complexity of
binary search, $r2^{3\lceil\log_2 r\rceil}$ is the number of shifts
of elements in an array for insertion to a proper place.

At step 2.1) we solve the $rSUM$ problem by modulus
$2^{3\lceil\log_2 r\rceil}$ for a quantity of different numbers not
exceeding $r2^{3\lceil\log_2 r\rceil}$, though there may be more
than one solution. The exhaustive enumeration of all the variants
requires $r^r2^{3r\lceil\log_2 r\rceil}$ operations.

All the above-calculated was a single iteration on cycle of $j$.

As $l = m\lceil\log_2 n\rceil + \lceil\log_2 r\rceil$ and $r,\;m$
are fixed numbers, we obtain the required assessment.\end{proof}

\begin{remark}\label{r3}It is convenient to assume that each element in the r-tuple
 from $\Upsilon_{j}$ (where elements of the r-tuple are ordinals of elements in
$\Phi_{j}$, as determined by us) is a column of such ordinals of
elements in $\Omega$, that the numbers corresponding to these
ordinals in $\Omega$ upon application of  $P^{3\lceil\log_2
r\rceil}_{j}$ will be equal to an element with this ordinal. We may
assume so, because we have a table of association of the elements in
$\Phi_{j}$ with elements in $\Omega$.\end{remark}

\begin{theorem}\label{t2} Algorithm's~\ref{a2} order of complexity is
sub-quadratic for some cases.
\end{theorem}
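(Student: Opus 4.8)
The plan is to isolate the deterministic, input-independent ``skeleton'' of Algorithm~\ref{a2} from its single data-dependent operation, the intersection in step~7, and to show that the total cost is governed by the latter. Algorithm~\ref{a2} performs exactly three invocations of Algorithm~\ref{a1} (steps 1, 3, 5), three cyclic shifts of the elements of $\Omega$ (steps 2, 4, 6), and the final intersection. By Lemma~\ref{l4} each call to Algorithm~\ref{a1} costs $O(n\log n)$ with $r$ and $m$ fixed, and each cyclic shift touches every element of $\Omega$ once, costing $O(n)$. Hence the skeleton is $O(n\log n)$, already far below quadratic, and the whole question reduces to bounding the cost of forming $\bigcap\limits_{i,j}\Upsilon^i_j$ relative to the elements of $\Omega$.

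First I would bound the number of candidates produced at the level of $\Phi_j$. Since $P^{3\lceil\log_2 r\rceil}_j$ takes at most $2\cdot 2^{3\lceil\log_2 r\rceil}$ distinct values (counting the sign), and each value is stored in $\Phi_j$ with multiplicity at most $r$, one has $|\Phi_j| = O(r\,2^{3\lceil\log_2 r\rceil})$, a quantity depending only on $r$. Consequently the brute-force search in step~2.1 yields at most a constant (in $r$) number of $\Phi$-level $r$-tuples in each $\Upsilon^i_j$; this is exactly why the per-iteration cost in Lemma~\ref{l4} is independent of $n$.

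The genuine blow-up, and the reason the theorem is stated only ``for some cases,'' arises in passing from the $\Phi$-level to the $\Omega$-level. By Remark~\ref{r3} each component of a $\Phi$-level $r$-tuple in $\Upsilon_j$ represents a whole column of $\Omega$-ordinals recorded in $\Pi_j$, so a single entry of $\Upsilon_j$ stands for a Cartesian product of $r$ such columns, whose cardinality may be as large as $n^r$. By Theorem~\ref{t1} the intersection in step~7 retains an $\Omega$-$r$-tuple if and only if it is represented in $\Upsilon^i_j$ for every $i$ and every $j$, so every genuine solution survives while most spurious tuples are discarded. I would therefore let $N$ denote the number of $\Omega$-$r$-tuples surviving all the modular tests across the three shifted runs and all scales $j$, and argue that step~7 can be realized in time $O((n+N)\log^3 n)$ by matching the column representations through the association tables $\Pi^i_j$, rather than by explicit enumeration of the products. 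The total cost of Algorithm~\ref{a2} is then $O((n+N)\log^3 n)$, which is sub-quadratic precisely in those cases where the input distributes across the $2^{3\lceil\log_2 r\rceil}$ residue blocks at each scale so that $N = o(n^2)$.

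The main obstacle is to make this last point quantitative. On one hand I must exhibit explicit families of inputs for which the successive modular filters at the scales $j = 0, 1, \dots, \lfloor\frac{l}{3\lceil\log_2 r\rceil}\rfloor$, together with the two relative shifts by $\lceil\log_2 r\rceil$ bits, force $N$ below $n^2$; on the other I must verify that the column-matching realization of step~7 actually attains the claimed output-sensitive bound without ever materializing the intermediate products. The delicate part is controlling how the column sizes in $\Pi^i_j$ interact under intersection: a degenerate input, such as many numbers sharing the same residue in every block, can keep $N$ as large as $n^r$, so the sub-quadratic guarantee cannot be universal and must be tied to a genuine dispersion hypothesis on the residues of the input numbers.
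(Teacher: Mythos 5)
There is a genuine gap, and you have in fact flagged it yourself: the entire argument rests on the claim that step 7 ``can be realized in time $O((n+N)\log^3 n)$ by matching the column representations through the association tables $\Pi^i_j$,'' where $N$ is the number of $\Omega$-level $r$-tuples surviving \emph{all} the modular tests. This output-sensitive bound is asserted, not proved, and it is precisely the hard point. The difficulty is that the intersection $\bigcap_{i,j}\Upsilon^i_j$ must be built up incrementally over the scales $j$, and the number of \emph{intermediate} refined $\Phi$-level tuples (columns split by successive intersections) is not controlled by the final count $N$: a tuple can survive many scales and die only at the last one, so the intermediate population can grow like $(\const)^{\#\{j\}} = n^{O(1)}$ even when $N$ is tiny. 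The paper is explicit about this — it writes down the bound $(2r!r2^{3(r-1)\lceil\log_2 r\rceil})^{\lceil m\lceil\log_2 n\rceil/\lceil\log_2 r\rceil\rceil}$ on the comparison tree and concedes ``it's a lot'' — and for that reason it does \emph{not} compute the full intersection at all.

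What the paper actually does is structurally different from your plan: it partitions the scales $j$ into blocks of length $h=\lceil\lceil\log_2\log_2 n\rceil/(9r\lceil\log_2 r\rceil)\rceil$ and computes only the blockwise intersections $\Gamma_s$, each of which has at most $\log_2^2 n$ $\Phi$-level tuples by construction, so that all $\Gamma_s$ cost $O(n\log^3 n)$; it then intersects only the $\lceil\log_{\lceil\log_2 n\rceil} n/3\rceil$ \emph{smallest} of the $\Gamma_s$ into a set $\Theta$ at cost $O(n^{5/3}\log n)$, counts the $\Omega$-level tuples represented by $\Theta$, and declares the ``some cases'' to be those inputs for which this count falls below $n^{3/(2r)}$ (so the survivors can be brute-forced). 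In other words, the paper trades completeness of the filter for an explicit a priori bound on intermediate sizes, and its sub-quadraticity hypothesis is a condition on the partial filter $\Theta$, not on the true survivor count $N$. Your condition $N=o(n^2)$ is therefore not the one the theorem is implicitly quantified over, and without either (a) a proof of the output-sensitive realization of the full intersection or (b) the paper's blocking-and-selection device, the proof does not close. To repair your version you would need to supply the column-matching algorithm and bound its intermediate state, which is exactly the obstacle you name in your final paragraph.
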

\begin{proof}All steps of the Algorithm~\ref{a2} except step 7) do not exceed the $n\log n$ order (see Lemma~\ref{l4}).

How to compute $\bigcap\limits_{i, j} \Upsilon^i_j$ relative to
elements of $\Omega$?

All r-tuples from $\Upsilon^i_j$ are tables, see Remark~\ref{r3}.

$\Upsilon^i_j$ contains no more $2r!r2^{3\lceil\log_2 r\rceil(r-1)}$
items. Comparing a r-tuple with another according to ordinals in
$\Omega$ will not make more than $rn\log_2 n$ operations. Consider
$\log_2 n$ as elements in $\Omega$ are read successively, and hence,
ordinals of elements of $\Omega$, related to an element of $\Phi_j$,
are set down in an orderly way, which means that we may use binary
search. Every time we create new r-tuple with common ordinals of
$\Omega$ in columns in one r-tuple and the other, if there is at
least one common element in each column.

As cycle $j$ ends $\lceil\frac{m\lceil\log_2 n\rceil}{3\lceil\log_2
r\rceil}\rceil$ in Algorithm~\ref{a1} and there are 3 execution of
Algorithm~\ref{a1} in Algorithm~\ref{a2}, we get upper bound of
vertices of such comparing r-tuples tree:
$$(2r!r2^{3(r-1)\lceil\log_2 r\rceil})^{\lceil\frac{m\lceil\log_2
n\rceil}{\lceil\log_2 r\rceil}\rceil}.$$ It's a lot,
 that's why we compute $$\Gamma_s = \bigcap\limits_{i,\; j=sh, \dots, (s+1)h-1}
 \Upsilon^i_j,\; where\; i=1, 2, 3,\; h = \lceil\frac{\lceil\log_2\log_2 n\rceil}{9r\lceil\log_2 r\rceil}\rceil,\; s=0, \dots, \lceil\frac{m\lceil\log_2 n\rceil}{3h\lceil\log_2
r\rceil}\rceil.$$

Cardinality of $\Gamma_s$ is less than
$$(2r!r2^{3\lceil\log_2
r\rceil(r-1)})^{\lceil\frac{\lceil\log_2\log_2
n\rceil}{3r\lceil\log_2 r\rceil}\rceil} \leq \log_2^2 n.$$ So, the
order of complexity of the computation of all $\Gamma_s$ is less
than $n\log_2^{3} n$.

Find $\lceil\frac{\log_{\lceil\log_2 n\rceil} n}{3}\rceil$ sets
$\Gamma_s$ with the smallest number of elements (it is of the order
of $n\log n$ operation) and compute confluence of them $\Theta$ (it
is of the order of $n^{\frac{5}{3}}\log_2 n$ operations).

To count the quantity of all variants produced by each r-tuple from
$\Theta$, relative to elements of $\Omega$, takes no more than
$2rn^{\frac{5}{3}}$ operations (amount of options generated by fixed
r-tuple is the product of the number of items in a columns of this
r-tuple).

\textbf{If the total number of r-tuples} from $\Theta$, relative to
elements of $\Omega$, is less than $n^{\frac{3}{2r}}$, we get
sub-quadratic time for our algorithm (brute force all of variants).

\textbf{If the total number of r-tuples} from $\Theta$, relative to
elements of $\Omega$, is less than
$\frac{n^{\frac{1}{2}}}{\log^{\frac{1}{r}} n}$, brute force still
would be faster than using known algorithms.
\end{proof}

\begin{theorem}\label{t3} Algorithm~\ref{a2} requires an amount of memory of an order $n\log^3 n$ relative to storage of integers.\end{theorem}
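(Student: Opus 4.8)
The plan is to account for every data structure that Algorithm~\ref{a2} keeps in memory and to bound each of them separately, measuring everything in units of a single stored integer (each input number is at most $n^m$, and every ordinal of $\Omega$ or residue modulo $2^{3\lceil\log_2 r\rceil}$ fits into $O(\log_2 n)$ bits, so each is treated as one integer). First I would dispose of the cheap structures. The set $\Omega$ together with the cyclically shifted copies used in steps 2), 4), 6) occupies $O(n)$ integers. For each iteration $j$ of Algorithm~\ref{a1} the array $\Phi_j$ contains at most $2r\cdot 2^{3\lceil\log_2 r\rceil}$ entries, which is a constant in $n$ since $r$ is fixed; and the association table $\Pi_j$ stores, distributed among the elements of $\Phi_j$, exactly one ordinal per element of $\Omega$, hence $O(n)$ integers.

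Next I would treat the r-tuples, which by Remark~\ref{r3} are themselves tables whose entries are columns of $\Omega$-ordinals. The key observation is that the $r$ columns of a single r-tuple are subsets of $\{1,\dots,n\}$, so one r-tuple occupies at most $rn=O(n)$ integers. Because the number of r-tuples in any $\Upsilon^i_j$ is bounded by the constant $2r!\,r\,2^{3\lceil\log_2 r\rceil(r-1)}$ (see the proofs of Lemma~\ref{l4} and Theorem~\ref{t2}), each $\Upsilon^i_j$ costs $O(n)$ integers, and summing over the $O(\log_2 n)$ admissible values of $j$ and the three executions of Algorithm~\ref{a1} yields $O(n\log_2 n)$.

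The dominant contribution is the storage of the intersection blocks $\Gamma_s$. From the proof of Theorem~\ref{t2} the cardinality of each $\Gamma_s$ is at most $\log_2^2 n$ r-tuples, and each such r-tuple is again a table whose $r$ columns are subsets of $\{1,\dots,n\}$ and therefore costs $O(n)$ integers; hence one block $\Gamma_s$ occupies $O(n\log_2^2 n)$ integers. Since the number of blocks is $\lceil\frac{m\lceil\log_2 n\rceil}{3h\lceil\log_2 r\rceil}\rceil=O(\log_2 n)$, retaining all blocks (together with the selected smallest ones from which the confluence $\Theta$ is assembled) costs $O(n\log_2^3 n)$ integers, which is exactly the claimed order. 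The confluence $\Theta$ and the concluding brute force only traverse r-tuples whose columns are subsets of $\Omega$, so they introduce nothing beyond this bound.

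The main obstacle I anticipate is the bookkeeping that guarantees a single r-tuple never exceeds $O(n)$ integers: one must argue that representing an r-tuple by its $r$ columns of $\Omega$-ordinals — rather than by the potentially exponentially many explicit $r$-fold choices that it abbreviates — is legitimate, which is precisely the content of Remark~\ref{r3}, and that the column-intersection step forming $\Gamma_s$ and $\Theta$ never enlarges a column beyond $\{1,\dots,n\}$. Granting this, the three contributions $O(n)$, $O(n\log_2 n)$ and $O(n\log_2^3 n)$ combine to the stated order $n\log_2^3 n$.
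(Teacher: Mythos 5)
Your accounting matches the paper's own proof: the paper likewise identifies step 7) as dominant, charges a constant for each $\Upsilon^i_j$ plus order $n$ for each association table $\Pi^i_j$ over the $O(\log n)$ values of $j$, and attributes the $n\log^3 n$ term to storing all the blocks $\Gamma_s$, citing the cardinality bound $\log_2^2 n$ from Theorem~\ref{t2}. You simply make explicit the arithmetic (number of blocks $\times$ r-tuples per block $\times$ $O(n)$ per r-tuple) that the paper leaves to the reader, so the proposal is correct and takes essentially the same route.
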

\begin{proof}As will readily be observed, the most memory-consuming step is
7).

Step 7) of Algorithm~\ref{a2} requires some memory for
$\Upsilon^i_{j}$ (constant quantity) and $\Pi^i_j$ associating
elements in $\Upsilon^i_{j}$ with elements in $\Omega$ (not more
than the order of $n$), $i=1, 2, 3$, $j=0, \dots, m\log n +\log
r$.

All together $\Gamma_s$ require the order of $n\log^3 n$ memory, see
Theorem~\ref{t2}.\end{proof}

\begin{remark}\label{r5}What is it about the constant in asymptotic complexity?

As follows from Theorem~\ref{t2} and Lemma~\ref{l4} the constant
would not exceed $3mr^{4r}.$\end{remark}

\begin{remark}\label{r6} As time and memory complexity of suggested algorithm is of the sub-quadratic order, it seems to be useful to perform it at the beginning of any other known algorithm.\end{remark}

\end{fulltext}

\end{document}